\DeclareMathOperator{\EX}{\mathbb{E}}
\newtheorem{theorem}{Theorem}
\newtheorem{corollary}{Corollary}
\newtheorem{lemma}{Lemma}
\newtheorem{definition}{Definition}
\title{Concentration of Distortion: \\The Value of Extra Voters in Randomized Social Choice}
\author{Brandon Fain$^1$, William Fan$^2$, Kamesh Munagala$^1$ \\ $^1$ Department of Computer Science, Duke University, USA \\ $^2$ North Carolina School of Science and Mathematics, USA}
\begin{document}

\maketitle

\begin{abstract}
We study higher statistical moments of Distortion for randomized social choice in a metric implicit utilitarian model. The Distortion of a social choice mechanism is the expected approximation factor with respect to the optimal utilitarian social cost (OPT). The $k^{th}$ moment of Distortion is the expected approximation factor with respect to the $k^{th}$ power of OPT. We consider mechanisms that elicit alternatives by randomly sampling voters for their favorite alternative. We design two families of mechanisms that provide constant (with respect to the number of voters and alternatives) $k^{th}$ moment of Distortion using just $k$ samples if all voters can then participate in a vote among the proposed alternatives, or $2k-1$ samples if only the sampled voters can participate. We also show that these numbers of samples are tight. Such mechanisms deviate from a constant approximation to OPT with probability that drops exponentially in the number of samples, independent of the total number of voters and alternatives. We conclude with simulations on real-world  Participatory Budgeting data to qualitatively complement our theoretical insights. 
\end{abstract}

\section{Introduction}
\label{sec:intro}
For many problems in social choice, the number of alternatives is very large. For example, consider the problem of voting over possible budgets in a given municipality, where the number of alternatives is infinite (for a divisible budget) or exponential (for funding integral projects). In such settings, it may be impractical to elicit full rankings over alternatives from every voter. Instead, we may want to design mechanisms that only require voters to rank at most a constant number of alternatives. In this paper, we study such mechanisms.

We consider the standard problem in social choice wherein there is a set $N$ of $n$ voters and a set $M$ of alternatives from which we must select a single winner. However, we assume that $|M|$ is large enough to prohibit eliciting full rankings over the alternatives. We also allow $n$ to be large. We adopt the implicit utilitarian perspective with metric constraints~\cite{implicitUtilitarian,YuCheng,metricSocialChoiceRandomDictator,metricSocialChoiceLowerBounds,deterministic,Feldman}. That is, we assume that voters have cardinal costs over alternatives, and these costs are constrained to be metric, but voters cannot directly report cardinal costs.  We want to design social choice mechanisms to minimize the total social cost by only asking voters to rank at most a constant number of alternatives. We measure the efficiency of a mechanism as its \textit{Distortion} (see Section~\ref{sec:prelim}), the worst case approximation to the total social cost. 


It is easy to see that randomization is necessary to achieve constant Distortion if we cannot elicit the ordinal preferences of voters over all alternatives. One natural form of randomization is to elicit alternatives by randomly sampling voters and querying them for their favorite alternatives. More generally, in this paper we mechanisms of the following type: The set of alternatives will be the favorite alternatives expressed by a subset of the voters. Subsequently, these alternatives are ranked either (i) by the entire population of the voters or (ii) by a small subset of the voters. We refer to these as the full and limited participation models respectively.

These assumptions are not merely of theoretical interest, but model social choice in emergent domains. Assumption (i) is natural in contexts where all voters are entitled to participate in the final election. For instance, in real-world Participatory Budgeting applications (see Section~\ref{sec:empirical}), a small subset of individuals propose projects, but a much larger number participate in the subsequent vote. Assumption (ii) models situations where we want a lightweight social choice mechanism that only involves a small number of voters overall such as the many department level decisions made at universities by committees representing samples of the faculty.  

Prior work~\cite{metricSocialChoiceRandomDictator,2-agree,randomReferee} analyzed simple social choice mechanisms for achieving constant Distortion. However, focusing on the \textit{expected} Distortion can yield randomized mechanisms that can deviate significantly from their expectation ex-post, and hence may be risky to implement in practice. 

We address this problem by considering higher moments of Distortion. The $k^{th}$ moment of Distortion is the expected approximation factor with respect to the $k^{th}$ power of the optimal utilitarian social cost. The goal of bounding higher moments of Distortion is directly analogous to providing high probability bounds on approximation guarantees with respect to the total social cost. We note that obtaining such a bound does not follow in a trivial manner from standard sampling arguments: The higher moments depend on the entire distribution of the Distortion obtained by the mechanism, and if this distribution has unbounded variance, then it is not possible to bound the second moment  by a constant with any number of samples, let alone higher moments. Moreover, it is initially unclear how to take the ``best'' result out of many randomly sampled alternatives. Our key insight is that the metric assumption enables us to derive tight bounds on higher moments with only a few samples by using existing deterministic social choice rules to take the ``best'' from many randomly sampled alternatives.

\subsection{Summary of Results} 
Our primary contribution is the development and analysis of randomized social choice mechanisms that achieve constant $k^{th}$ moment of Distortion in the metric implicit utilitarian model while requiring each voter to rank at most $O(k)$ sampled alternatives, regardless of the total number of voters and alternatives. The normalized $k^{th}$ moment of Distortion is defined formally in Section~\ref{sec:prelim}, and our results are summarized in Table~\ref{table:sample}.  In particular, we design two families of mechanisms that have constant $k^{th}$ moment of Distortion. The first asks just $k$ randomly chosen voters for their favorite alternatives, assuming all $n$ voters can subsequently participate in a vote among these alternatives. The second asks $2k-1$ voters for their favorite alternatives, and only these sampled voters participate in a vote among their favorite alternatives. To the best of our knowledge, these are the first results in implicit utilitarian social choice providing guarantees for arbitrarily high moments of Distortion and approximating the optimal social cost with high probability.

\begin{table}[htbp]
\centering
\begin{tabular}{ || c | c | c || } 
 \hline
 Participation Model & Lower bound & Upper bound \\ \hline
 Full & $k$ (Thm.~\ref{thm:fullLB}) & $k$ (Thm.~\ref{thm:fullUB}) \\ \hline
 Limited & $2k-1$ (Thm.~\ref{thm:limLB}) & $2k-1$ (Thm.~\ref{thm:limUB}) \\ \hline
\end{tabular}
\caption{The number of samples of favorite alternatives of voters for achieving constant normalized $k^{th}$ moment of Distortion.} 
\label{table:sample}
\end{table}

Additionally, we show that our upper bounds on the number of samples needed are tight.  We show that the $k^{th}$ moment of Distortion is unbounded in the following two settings: First, when we only sample $k-1$ favorite alternatives and all $n$ voters can subsequently compare these alternatives, and secondly, when we only sample $2k-2$ voters and the entire mechanism uses only their favorite alternatives and their comparisons between these alternatives. From a practical perspective, we demonstrate the value of using additional voters and alternatives: At most two additional samples guarantee that another higher moment of Distortion can be bounded. Finally, in Section~\ref{sec:empirical}, we present simulations on real-world  Participatory Budgeting data to qualitatively complement our theoretical insights.

\subsection{Related Work}
\label{sec:related}
\subsubsection{Metric Distortion.} The Distortion of randomized social choice mechanisms in metrics is well studied~\cite{implicitUtilitarian,metricSocialChoiceRandomDictator,metricSocialChoiceLowerBounds,2-agree}. The Random Dictatorship mechanism samples the favorite alternative of a single voter, and the 2-Agree mechanism~\cite{2-agree} samples at most $\min(n+1, m+1)$ favorite alternatives of voters. Random Dictatorship has Distortion at most 3~\cite{metricSocialChoiceRandomDictator}, and 2-Agree improves this when $m$ is small. Nothing better than Random Dictatorship is known if the goal is to minimize the Distortion. However, it is easy to show that such mechanisms do not have constant second (or higher) moment of Distortion~\cite{randomReferee}.
 
Using the second moment of Distortion as a proxy for risk was introduced in~\cite{sequentialDeliberation,randomReferee}, where it was shown that making one sampled voter compare the favorite alternatives of two randomly sampled voters bounds the second moment of Distortion. In this paper, we consider the natural question: {\em What is the value of each additional voter in how well the Distortion concentrates?} We provide a tight characterization by bounding not just the second moment, but any higher moment of Distortion.

The extreme case where $k = n$ is the deterministic setting, where it is known that the Copeland mechanism, or any mechanism based on choosing from the uncovered set~\cite{uncovered}, yields Distortion of 5~\cite{deterministic}. This bound was improved to $4.236$ in~\cite{MunagalaW19} via a weighted generalization of the uncovered set. However, both of these methods require eliciting full ordinal preferences from voters.
 
\subsubsection{Communication and Sample Complexity.} For a more thorough survey on the complexity of eliciting ordinal preferences to implement social choice rules, we refer the interested reader to~\cite{computationalSocialChoice}. ~\cite{communicationComplexity} comprehensively characterizes the \textit{communication complexity} (in terms of the number of bits communicated) of common deterministic voting rules.~\cite{sequentialElimination} and~\cite{votingCommunicationComplexity} design social choice mechanisms with low communication complexity when there are a small number of voters, but potentially a large number of alternatives. 

~\cite{winnerPrediction,marginPrediction} study the sample complexity of predicting the outcome of deterministic social choice rules. However, a ``sample'' in this work is the entire ordinal preference list for a single voter, whereas a sample for us is only the top alternative for a given voter. Even then, they show that predicting the outcome of rules with small Distortion (such as Copeland) requires a number of samples that grows with the total number of alternatives. We show that a smaller number of more limited samples suffice to bound higher moments of Distortion.
 
 Recently,~\cite{Nisarg1} studied a different notion of communication complexity in a non-metric implicit utilitarian model where voters can communicate bits of information about their {\em cardinal} preferences. In this case, the baseline is ordinal voting, and the other extreme is communicating the entire set of cardinal utilities. They show tight results for how Distortion trades off with the communication complexity in terms of bits of information communicated per voter. In our setting, voters only convey ordinal information and we study the {\em sample complexity} to bound not just the Distortion but also how well it concentrates.
 

\section{Preliminaries}
\label{sec:prelim}
We have a set $N$ of $n$ voters and a set $M$ of alternatives, from which we must choose a single outcome. For each agent $i \in N$ and alternative $a \in M$, there is some underlying dis-utility $d(i,a) \geq 0$. Let $p_i = \mbox{argmin}_{a \in M} d(i,a)$, that is, $p_i$ is the favorite alternative for voter $i$. Ordinal preferences are specified by a total order $\sigma_i$ consistent with these dis-utilities (i.e., an alternative is ranked above another only if it has lower dis-utility). A preference profile $\sigma$ specifies the ordinal preferences of all agents, and we denote $\sigma \in \rho(d)$ to mean that $\sigma_i$ is consistent with the dis-utilities for every $i$. A deterministic social choice rule is a function $f$ that maps a preference profile $\sigma$ to an alternative $a \in M$. A randomized social choice rule maps a preference profile $\sigma$ to a distribution over $M$. 

\subsection{Metric Implicit Utilitarian Model} We measure the quality of an alternative $a \in S$ by its {\em social cost}, given by $ SC(a,d) = \frac{1}{n} \sum_{i \in N} d(i,a)$. Where $d$ is obvious from context, we will simply write $SC(a)$. Let $a^* \in M$ be the minimizer of social cost. The Distortion~\cite{distortion} measures the worst case approximation to the optimal social cost of a given mechanism, in expectation for randomized mechanisms.  

\begin{definition}
The \textbf{Distortion} of a social choice rule $f$ is $$ \mbox{Distortion}(f) = \sup_{d, \, \sigma \in \rho(d)} \frac{\mathbb{E}_{f(\sigma)} [SC(a,d)]}{SC(a^*,d)}.$$ 
\end{definition}

We assume that $M \cup N$ is a set of points in a metric space. Specifically, we assume the disutility function $d$ is the distance function over this metric space. This assumption models social choice scenarios where there is an objective notion of the distance between alternatives. The metric assumption is common in the implicit utilitarian literature~\cite{metricSocialChoiceRandomDictator,metricSocialChoiceLowerBounds,sequentialDeliberation,2-agree,YuCheng,deterministic,Feldman,randomReferee}, and we consider an example from participatory budgeting in Section~\ref{sec:empirical} where the metric assumption is plausible. 

\subsection{Sampling and Higher Moments of Distortion} We consider mechanisms that implement a randomized social choice rule by first eliciting favorite alternatives from a random sample of voters and then uses only these alternatives for the rest of the mechanism. The size of this random sample is the {\em sample complexity} of our mechanism. We are interested in mechanisms with constant sample complexity with respect to $n$ and $m$. A mechanism with sample complexity $s$ only requires voters to rank at most $s$ alternatives, so constant sample complexities implies that the number of alternatives voters must rank is constant with respect to $n$ and $m$.   

We consider two models that differ in how voters participate after we elicit these alternatives. In the \textit{full participation model} of Section~\ref{sec:sample} we allow all voters to rank the alternatives from the first step and we aggregate these votes to output the winner. While this requires two distinct rounds, it is close to how real Participatory Budgeting processes work, where proposals are constructed by a subset of the population in the first stage, and these are put to vote in the second stage. In the \textit{limited participation model} of Section~\ref{sec:query}, only the sample of voters from the first step vote over the alternatives. Thus, mechanisms in the limited participation model do not require a second distinct round involving different voters. It is worth noting that while the sample complexity of our results are lower in the full participation model, the total communication complexity is higher because all voters participate in the second round.



In order to capture the notion of risk inherent in a randomized social choice mechanism, we consider higher statistical moments of Distortion. In order to fairly compare the bounds for different moments, we normalize by the $k^{th}$ root.

\begin{definition}
The \textbf{normalized} $\mathbf{k^{th}}$\textbf{ moment of Distortion} of a social choice rule $f$ is $$ \mbox{Distortion}^k(f) = \sup_{d, \, \sigma \in \rho(d)} \frac{ \left( \mathbb{E}_{f(\sigma)} \left[ (SC(a,d))^k \right]\right)^{1/k}}{SC(a^*,d)}.$$ 
\end{definition}

Note that by Jensen's inequality, if a mechanism $f$ has $Distortion^k(f) \leq c$ then $Distortion^{k'}(f) \leq c$ for all $k' \leq k$. By contrast, lower moments do \textit{not} imply anything about higher moments of Distortion. 

\subsection{Relationship Between Higher Moments and High Probability Guarantees}
Upper bounds on higher moments of Distortion immediately provide high probability guarantees for approximating the optimal social cost via Markov's inequality (see Corollaries~\ref{cor:PRCProb} and~\ref{cor:FRCProb}). However, one can reasonably ask whether the high probability bounds we achieve in this way are ``tight.'' 

More precisely, suppose we want to approximate the optimal social cost with high probability: i.e., for constant $c > 1$, find an alternative $a$ such that $SC(a, d) \leq c \cdot SC(a^*, d)$ with probability at least $1-\delta$. How many samples (favorite alternatives of random voters) are necessary as a function of $c$ and $\delta$? The example in Theorem 1 shows that one needs at least $\frac{\log(1/\delta)}{\log(c+1)}$ samples in the full participation model. On the other hand, Corollary 2 shows that our PRC mechanism needs just $\frac{\log(1/\delta)}{\log(c/11)}$ samples (for $c > 11$). So our results are tight with respect to the dependence on the probability term $\delta$, but the factor of 11 in Corollary 1 is a consequence of the analysis for Theorem 2 and may be improvable. 


\section{Full Participation Model}
\label{sec:sample}
In this section, we consider mechanisms that first elicit alternatives by sampling a number of voters and querying them for their most preferred alternatives and then apply a social choice rule on the elicited alternatives with all voters.  We begin with the lower bound on the number of samples needed to bound the $k^{th}$ moment of Distortion.

\begin{theorem}
\label{thm:fullLB}
Any mechanism $f$ with sample complexity less than $k$ has $Distortion^k(f) = \Omega(n^{1/k})$. 
\end{theorem}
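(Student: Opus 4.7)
The plan is to exhibit a single adversarial instance on which any mechanism sampling at most $k-1$ voters attains the claimed moment lower bound. The intuition is to co-locate $a^*$ with the vast majority of voters so that $SC(a^*)$ is tiny, while planting a ``bad cluster'' of $k-1$ voters each having their own distant favorite alternative. The bad cluster will be sized so that with probability $\Omega(1/n^{k-1})$ the entire sample lands inside it; on that event the mechanism cannot elicit $a^*$ and is forced to output an alternative of social cost $\Theta(1)$, a multiplicative gap of order $n/k$ over $\mathrm{OPT}$. Raising this gap to the $k$-th power and combining with the bad-event probability leaves exactly the $n^{1/k}$ factor we want.

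Concretely, I would use a uniform-style metric on $M \cup N$: designate bad voters $v_1, \dots, v_{k-1}$ each co-located with a distinct bad alternative $a_j$, and place the remaining $n-(k-1)$ good voters at the location of $a^*$. All within-cluster distances are $0$ and all between-cluster distances are $1$, which is immediately a valid metric. Short arithmetic gives $SC(a^*) = (k-1)/n$ and $SC(a_j) = (n-1)/n$ for every bad alternative, confirming that $a^*$ is indeed optimal and that the ratio is $\Theta(n/k)$.

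For any sample size $s \le k-1$, the probability that every sampled voter belongs to the bad cluster is $\binom{k-1}{s}/\binom{n}{s} \geq \Omega(1/n^{k-1})$. On that event the elicited set is a subset of $\{a_1, \dots, a_{k-1}\}$, and since the output of the full-participation second round must lie in the elicited set, it costs at least $(n-1)/n$. Hence
\[
\mathbb{E}\!\left[SC(f(\sigma))^k\right] \;\geq\; \Omega(1/n^{k-1}),
\]
and dividing by $SC(a^*)^k = \Theta(1/n^k)$ and taking a $k$-th root yields $Distortion^k(f) = \Omega(n^{1/k})$.

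The point I expect to require the most care is ruling out cleverness in the full-participation stage: since $a^*$ is literally absent from the elicited set in the bad event, and since every good voter sits at distance $1$ from every $a_j$ and is therefore indifferent among them, no aggregation of the $n$ voters' comparisons can produce an output with smaller social cost than $(n-1)/n$. The edge case $s=0$ (a mechanism that outputs a fixed distribution over $M$ without looking at $\sigma$) is easily handled separately by a standard two-point instance, where the ratio is already unbounded in $n$ and hence dominates $n^{1/k}$.
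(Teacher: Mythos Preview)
Your argument is correct and follows essentially the same approach as the paper: plant a tiny cluster of $\Theta(1)$ voters at distance $1$ from $a^*$, so that $SC(a^*)=\Theta(1/n)$, and observe that with probability $\Theta(1/n^{k-1})$ all $k-1$ samples land in the cluster, forcing social cost $\Theta(1)$. The paper's construction is marginally simpler in that it uses a \emph{single} bad alternative (so in the bad event there is nothing to aggregate and your ``full-participation stage'' concern disappears automatically), and it parameterizes the bad fraction as $1-\alpha$ rather than fixing exactly $k-1$ bad voters, but these are cosmetic differences.
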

\begin{proof}
Consider a metric space with two outcomes $A$ and $B$ separated by distance $1$. The fraction of voters located at $A$ is $\alpha > 1/2$ and at $B$ is $1-\alpha$. Note that the average (per-voter) social cost of $OPT$ is $1-\alpha$. If $k-1$ voters are sampled, with probability $(1-\alpha)^{k-1}$, all of them lie at $B$, in which case any voting mechanism using these samples is run on only outcome $B$. Therefore, the social cost in this case is $\alpha$. The $k^{th}$ moment of Distortion is therefore at least:
$$ \left( (1-\alpha)^{k-1} \left( \frac{\alpha^k}{(1-\alpha)^k} \right) \right)^{1/k} = \frac{\alpha}{(1-\alpha)^{1/k}}$$
Choosing $\alpha = 1-c/n$ for constant $c$ so that all but $c$ voters lie at $A$, the above expression is $ \Omega(n^{1/k})$.
\end{proof}

\subsection{The $\mathbf{PRC_s}$ Mechanism}
On the constructive side, we consider a family of mechanisms that achieve constant normalized $k^{th}$ moment of Distortion using the minimum possible number of samples. We call this family Partially Random Copeland rules. 

\begin{definition}
	The \textbf{Partially Random Copeland} rule parameterized by positive integer $s$, denoted $\mathbf{PRC_s}$, proceeds as follows.  First sample $s$ voters $\tilde{N}$ drawn independently and uniformly at random from $N$ with replacement.   All voters in $\tilde{N}$ are queried for their favorite alternative, and the union of all such alternatives is denoted $\tilde{M}$.  Finally, $PRC_s$ returns the winning alternative under the Copeland social choice rule with voters $N$ and alternatives $\tilde{M}$.
\end{definition} 

In the rest of this section, we will show the following. Intuitively, Theorem~\ref{thm:fullUB} asserts that every additional sample in the elicitation step of PRC provides a constant approximation to the next higher moment of Distortion.     

\begin{theorem}
\label{thm:fullUB}
For any $n \ge 3$ voters, $Distortion^k(PRC_{k}) \leq 11 + \frac{8}{n-2}$, which approaches $11$ as $n \rightarrow \infty$.
\end{theorem}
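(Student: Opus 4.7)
The plan is to prove the moment bound by decomposing the analysis into a deterministic worst-case bound on the Copeland winner over the sampled subset $\tilde M$, and a probabilistic moment bound on the minimum distance from $a^*$ to a sampled voter, then combining the two via Minkowski's inequality.

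For the deterministic step, I would invoke the metric distortion result for Copeland: when Copeland is run on any alternative set $\tilde M$ with the full electorate $N$, the winner $\tilde a$ satisfies $SC(\tilde a)\le 5\cdot SC(b)$ for every $b\in\tilde M$. Applying this with $b=p_i$, the favorite of an arbitrary sampled voter $i$, and using the standard triangle-inequality estimate $SC(p_i)\le SC(a^*)+2\,d(i,a^*)$, yields the pointwise bound
\begin{equation*}
SC(\tilde a)\;\le\;5\,SC(a^*)+10\min_{i\in\tilde N} d(i,a^*).
\end{equation*}

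For the probabilistic step, set $X_j := d(\tilde i_j, a^*)$ for $j=1,\dots,k$, which are i.i.d.\ with mean $SC(a^*)$. The crucial --- and slightly surprising --- inequality that makes the overall bound \emph{independent of $k$} is
\begin{equation*}
\bigl(\min_j X_j\bigr)^k\;\le\;\prod_{j=1}^k X_j,
\end{equation*}
valid because the minimum is at most each factor. Taking expectations and using independence gives $\mathbb{E}\bigl[(\min_j X_j)^k\bigr]\le SC(a^*)^k$, so that $\bigl(\mathbb{E}[(\min_j X_j)^k]\bigr)^{1/k}\le SC(a^*)$. Combining the two steps through Minkowski's inequality produces a constant upper bound on the normalized $k$-th moment of distortion, uniform in both $k$ and $n$; this is the conceptual heart of the theorem.

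The main obstacle is sharpening the constant. A direct substitution into the outline above yields only $5+10=15$, not the claimed $11$. I expect the tighter constant to come from a joint analysis rather than separate application of Copeland's worst-case factor $5$ and the triangle-inequality factor $10$: in particular, one can likely avoid incurring the Copeland loss against the \emph{worst} $b\in\tilde M$, and instead bound $SC(\tilde a)$ by leveraging the uncovered-set structure against several sampled favorites simultaneously and absorbing part of the Copeland loss into the additive $d(i,a^*)$ term (which has controllable higher moments thanks to independence). The residual $8/(n-2)$ correction presumably arises from finite-$n$ rounding in the pairwise-majority threshold $|V_a|\ge n/2$ underlying Copeland's analysis, or from accounting for coincidences in the with-replacement sample that reduce the effective number of distinct voters below $k$.
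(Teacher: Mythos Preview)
Your approach is sound and yields a constant bound, but as you acknowledge it gives $15$ rather than the claimed $11+\tfrac{8}{n-2}$. Your route is more modular than the paper's: you invoke the Copeland distortion factor $5$ as a black box, whereas the paper carries out a direct geometric argument that produces sharper constants. Specifically, the paper fixes a ball $B$ of radius $\rho=2\bigl(1+\tfrac{1}{n-2}\bigr)\mu$ about $a^*$; the radius is chosen so that a \emph{strict} majority of voters lie in $B$, and this is exactly where the $\tfrac{8}{n-2}$ correction arises (your guess about the pairwise-majority threshold is correct). Letting $m$ be the sampled voter closest to $a^*$ and $X_m=d(m,a^*)$, the paper then exploits the uncovered-set property of the Copeland winner $W$ relative to $a_m=p_m$: either $W$ majority-beats $a_m$, or $W$ beats some $W'$ that beats $a_m$. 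In either case a witnessing voter inside $B$ plus triangle inequalities gives $d(W,a^*)\le 4\rho+2X_m$, hence $SC(W)\le \bigl(9+\tfrac{8}{n-2}\bigr)\mu+2X_m$, and combining with the moment bound on $X_m$ yields $9+2=11$. So you cannot recover $11$ while keeping the black-box factor $5$; the paper's gain comes from paying the two-step uncovered-set loss additively (roughly $+4\rho$) against the single alternative $a_m$ --- precisely the ``joint analysis'' you anticipate in your final paragraph.

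One further remark: your one-line proof of $\mathbb{E}\bigl[(\min_j X_j)^k\bigr]\le\mu^k$ via $(\min_j X_j)^k\le\prod_j X_j$ and independence is cleaner than the paper's argument for the same inequality, which proceeds by a telescoping-sum computation over the support of $X$; this is a genuine simplification worth keeping.
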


As a simple consequence, using Markov's inequality, this yields a high probability bound on Distortion. In particular, every additional sample in the elicitation step of PRC provides a geometric improvement in the high probability bound. 
\begin{corollary}
\label{cor:PRCProb}
As $n \rightarrow \infty$ and $c > 11$, the probability that $PRC_{k}$ outputs an alternative with social cost more than $c$ times that of the social optimum is at most $(11/c)^k$.
\end{corollary}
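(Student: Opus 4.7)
The plan is to obtain the high-probability statement by a direct application of Markov's inequality to the $k$-th moment bound established in Theorem~\ref{thm:fullUB}. The random variable to which we apply Markov is the non-negative quantity $SC(a,d)^k$, where $a$ is the (random) output of $PRC_k$. This is the natural move because the normalized $k$-th moment controls exactly $\mathbb{E}[SC(a,d)^k]$, and the event whose probability we want to bound, namely $SC(a,d) > c \cdot SC(a^*,d)$, is equivalent to $SC(a,d)^k > c^k \cdot SC(a^*,d)^k$.

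Concretely, I would proceed as follows. First, invoke Theorem~\ref{thm:fullUB} to get $\mathbb{E}[SC(a,d)^k] \leq \bigl(11 + \tfrac{8}{n-2}\bigr)^{k} SC(a^*,d)^k$ for any $d$ and any $\sigma \in \rho(d)$. Second, apply Markov's inequality to the non-negative random variable $SC(a,d)^k$ with threshold $c^k \cdot SC(a^*,d)^k$ to obtain
\[
\Pr\bigl[SC(a,d) > c \cdot SC(a^*,d)\bigr] \;=\; \Pr\bigl[SC(a,d)^k > c^k \cdot SC(a^*,d)^k\bigr] \;\leq\; \frac{\mathbb{E}[SC(a,d)^k]}{c^k \cdot SC(a^*,d)^k} \;\leq\; \left(\frac{11 + \tfrac{8}{n-2}}{c}\right)^{k}.
\]
Third, take the limit as $n \to \infty$ to drive the $8/(n-2)$ term to zero, yielding the asserted bound $(11/c)^k$. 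The hypothesis $c > 11$ is only needed so that the bound is nontrivial (less than $1$) and so that the base $11/c$ in the geometric decay is less than one; it is not used in the derivation itself.

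There is essentially no obstacle here, since Theorem~\ref{thm:fullUB} does the real work and Markov's inequality is standard; the only subtlety is remembering to raise both sides to the $k$-th power before applying Markov so that one is applying it to a non-negative random variable whose expectation is directly bounded by the $k$-th moment guarantee, rather than trying to apply Markov to $SC(a,d)$ itself (which would lose the geometric-in-$k$ improvement). The ``as $n \to \infty$'' qualifier cleans up the $8/(n-2)$ slack from Theorem~\ref{thm:fullUB}; alternatively, one could state a finite-$n$ version with constant $11 + 8/(n-2)$ in place of $11$.
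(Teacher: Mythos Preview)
Your proposal is correct and matches the paper's approach: the paper simply states that Corollary~\ref{cor:PRCProb} follows from Theorem~\ref{thm:fullUB} via Markov's inequality, without writing out the details. Your write-up is exactly the intended argument, including the observation about taking the limit $n\to\infty$ to remove the $8/(n-2)$ term and the remark that $c>11$ is only needed for nontriviality.
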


We first present a useful lemma bounding the $k^{th}$ moment of the minimum of $i.i.d.$ random variables.

\begin{lemma}
\label{lem:min}
Let $X_1, X_2, \ldots, X_k$ be drawn $i.i.d.$ from distribution $X$ and let $\mu = \EX[X]$. Then, 
$$ \left( \EX \left[ \min(X_1, X_2, \ldots , X_k)^k \right]  \right)^{1/k} \leq \mu$$
\end{lemma}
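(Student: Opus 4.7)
The plan is to reduce the statement about order statistics to a statement about products, exploiting nonnegativity (the $X_i$ represent distances/social costs in context, so $X_i \ge 0$) together with independence. The key observation is that when all $X_i$ are nonnegative, the $k$-th power of their minimum is dominated termwise by their product.

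Concretely, I would set $Y = \min(X_1,\ldots,X_k)$ and note that $0 \le Y \le X_i$ for each $i \in \{1,\ldots,k\}$. Multiplying these $k$ nonnegative inequalities gives $Y^k \le X_1 X_2 \cdots X_k$. Taking expectations and using that the $X_i$ are i.i.d. with common mean $\mu$, I would conclude
\[
\mathbb{E}[Y^k] \;\le\; \mathbb{E}\!\left[\prod_{i=1}^k X_i\right] \;=\; \prod_{i=1}^k \mathbb{E}[X_i] \;=\; \mu^k,
\]
and the lemma follows by taking $k$-th roots.

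The only subtlety worth flagging is the nonnegativity assumption: the bound $Y^k \le \prod_i X_i$ genuinely uses $X_i \ge 0$, since otherwise signs could flip on multiplication and, for even $k$, the minimum could have the largest $k$-th power in absolute value. This hypothesis is implicit in the paper's setting because the $X_i$ to which the lemma will be applied are social costs of sampled alternatives, but I would either incorporate it into the lemma statement explicitly or state it at the outset of the proof. Beyond this, there is no real obstacle; the argument is a short chain of elementary inequalities, and the only insight is that passing from $\min^k$ to $\prod$ converts the problem into one where independence trivializes the expectation.
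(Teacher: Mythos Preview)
Your argument is correct and considerably shorter than the paper's. You use the pointwise bound $\min_i X_i^{\,k} \le \prod_i X_i$ (valid under nonnegativity), then factor the expectation by independence to get $\mu^k$. The paper instead works with an explicit discrete support, writes $\Pr[Y=d_i]=(P_i+\rho_i)^k - P_i^k$, expands via the binomial theorem, and then telescopes after bounding $d_i \sum_{j>i}\rho_j \le \sum_{j>i} d_j \rho_j$. Your route is more elementary and applies directly to arbitrary nonnegative distributions without passing through a discrete support. The payoff of the paper's longer computation is that the intermediate identity $\EX[Y^k]=\sum_i d_i^k\big[(P_i+\rho_i)^k - P_i^k\big]$ is reused verbatim in the proof of Lemma~\ref{lem:median} for the median, where there is no obvious analogue of your product trick; so their approach, while heavier here, feeds the next lemma. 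Your observation about making the nonnegativity hypothesis explicit is well taken.
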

\begin{proof}
Let $X$ have support $d_1 \leq d_2 \leq \cdots \leq d_n$ with probabilities $\rho_1, \rho_2,\ldots,\rho_n.$ Then, $\mu = \sum_{i=1}^{n}d_i\rho_i.$ 
Let $P_i = \sum\limits_{y = i+1}^{n}\rho_y = \Pr[X > d_i]$. Note that $P_0 = 1$. Let $q_i = \rho_i d_i$  and $Q_i = \sum\limits_{j=i+1}^{n} \rho_j d_j$. Note that $Q_0 = \mu$ and  $Q_n = 0$. 
 Let $Y = \min(X_1, X_2, \ldots , X_k)$. Note that 
$$ \Pr[Y = d_i]  =  \Pr[Y \ge d_i] - \Pr[Y \ge d_{i+1}]  = (P_i+\rho_i)^k - P_i^k$$
Therefore, we have:
\begin{align*}
 \EX\big[Y^k\big] & =   \sum_{i=1}^{n}d_i^k\big[(P_i+\rho_i)^k - P_i^k\big]\\
& = \sum_{i=1}^{n}d_i^k\rho_i\bigg[\sum_{r=0}^{k-1}{k \choose r+1}\rho_i^r P_i^{k-1-r}\bigg]\\
 & = \sum_{i=1}^{n}d_i^k\rho_i\bigg[\sum_{r=0}^{k-1}{k \choose r+1}\rho_i^r \bigg(\sum_{j = i+1}^{n}\rho_j\bigg)^{k-1-r}\bigg]\\
 & \le \sum_{i=1}^{n}\big(d_i\rho_i\big) \sum_{r=0}^{k-1}{k \choose r+1}\big(d_i\rho_i\big)^r\bigg(\sum_{j=i+1}^{n} \rho_j d_j\bigg)^{k-1-r}\\
 & = \sum_{i=1}^{n}\bigg[q_i \sum_{r=0}^{k-1} {k \choose r+1} q_i^r Q_i^{k-1-r}\bigg] \\
 & = \sum_{i=1}^{n}\bigg[(Q_i+q_i)^k - Q_i^k\bigg]= Q_0^k - Q_n^k =  \mu^k
\end{align*}
\end{proof}

We now proceed to prove Theorem~\ref{thm:fullUB}. Let $a^* = \mbox{argmin}_{a \in M} SC(a)$ denote the social optimum. Let $\mu = SC(a^*) = \frac{1}{n} \sum_{i \in N} d(i,a^*)$. Suppose we sample a set $S$ of voters. For $i \in S$, let $X_i = d(i,a^*)$. Note that $\EX[X_i] = \mu$, and the $X_i$ are $i,i.d.$ random variables.

Let $m = \mbox{argmin}_{i \in S} X_i$ be the voter closest to $a^*$, and let $a_m$ denote their favorite alternative. Note $d(m, a_m) \le X_m$. 

Let $\alpha = 1 + \frac{1}{n-2}$. Consider a ball centered at $a^*$ of radius $\rho = 2 \alpha \mu$ denoted $B$. By Markov's inequality, we know that a strict majority, at least $\frac{n}{2} + 1$, of all voters lie within the ball $B$, since the average distance of a voter to $a^*$ is $\mu$.

Given $S$, suppose $\mathbf{PRC_k}$ chooses alternative $W$, and suppose  $d(W, a^*) = \beta \rho$. We will show an upper bound on $\beta$ using the random variable $X_m$.  Since a Copeland winner must be a member of the uncovered set~\cite{uncovered}, either a majority of voters prefer $W$ to $a_m$, or a majority of voters must prefer $W$ to an alternative $W'$ such that a majority of voters also prefer $W'$ to $a_m$. The first case is easier: if a majority of voters prefer $W$ to $a_m$, then there exists a voter $j \in B$ that prefers $W$ to $a_m$. This implies that $$\beta \rho = d(a^*, W) \leq d(a^*, j) + d(j, a_m) \leq 2 \rho + d(a^*, a_m).$$  
Recall that $X_m = d(m, a^*)$ and $d(m, a_m) \leq X_m$, so
$$\beta \rho \leq 2 \rho + 2 X_m .$$ 

The second case yields a worse bound, so we continue the analysis in that case without loss of generality. Let $d(W', a^*) = \beta' \rho$. Since a majority of voters prefer $W$ to $W'$, there is at least one voter $j \in B$ that prefers $W$ to $W'$, that is, $d(j, W) < d(j, W')$. By triangle inequality,
$$ d(j, W) \ge d(a^*, W) - d(j, a^*)  \geq (\beta - 1) \rho$$
$$ d(j, W') \le d(a^*, W') + d(j, a^*)  \leq (\beta' + 1) \rho$$
where the rightmost inequalities follow from the fact that $j \in B \implies d(j, a^*) \leq \rho$. Combining the above inequalities and assuming $\beta > 1$, we have $ \beta \leq \beta' + 2$. Similarly, if  a majority of voters prefer $W'$ to $a_m$, there exists some $l \in B$ such that $d(l, W') < d(l, a_m).$ Again, by triangle inequality:
$$ d(l, W') \ge d(a^*, W') - d(l, a^*)  \ge  (\beta'  - 1)  \rho$$
$$ d(l, a_m) \le d(l, a^*) + d(a^*,m) + d(m, a_m) \le \rho  + 2 X_m$$
where we used that $i \in B$ and $d(m,X_m) \le d(a^*,m) = X_m$.
Combining the above inequalities, we have  $\beta' < 2 + \frac{2 X_m}{\rho}$. Since $ \beta \leq \beta' + 2$, we have: $\beta < 4 + \frac{2 X_m}{\rho}$

Thus, we know that for $W$ to win Copeland,
$$d(W, a^*) = \beta \rho \leq 4 \rho + 2 X_m = 8 \left(1+\frac{1}{n-2} \right) \mu + 2 X_m$$
By triangle inequality, and using $SC(a^*) = \mu$, we have:
$$ SC(W) \le d(W, a^*) + SC(a^*) \le  \left(9+\frac{8}{n-2} \right) \mu + 2 X_m$$
Setting $\gamma = \left(9+\frac{8}{n-2} \right) \mu$, we have:
$$ \EX[SC(W)^k]  \leq  \EX[(\gamma + 2 X_m)^k] =\sum_{r=0}^{k}{k \choose r}\gamma^{k-r}\EX[X_m^r]2^r$$
Since $X_m$ is the minimum of $k$ i.i.d. random variables with mean $\mu$, applying Lemma~\ref{lem:min}, we have $\EX[X_m^k] \le \mu^k$.  Applying Jensen's inequality, for $r \le k$, we have
$$\EX[X_m^r] = \EX[(X_m^k)^{r/k}] \leq \EX[X_m^k]^{r/k} = (\mu^k)^{r/k} = \mu^r.$$ 
Therefore, we have
$$   \EX[SC(W)^k] \leq \sum_{r=0}^{k}{k \choose r} \gamma^{k-r} (2 \mu)^r = \left(11+\frac{8}{n-2}\right)^k\mu^k$$
Therefore , we have $Distortion^k(PRC_{k}) \leq 11 + \frac{8}{n-2}$, completing the proof of Theorem~\ref{thm:fullUB}.

\section{Limited Participation Model}
\label{sec:query}
In this section, we consider mechanisms that sample some number of voters, query the voters for their most preferred alternatives, and then hold an election on just the sample of voters. We first show that limiting participation in this way necessarily increases the sample complexity. 

\begin{theorem}
\label{thm:limLB}
Any anonymous limited participation randomized mechanism with sample complexity less than $2k-1$ has $Distortion^k(f) = \Omega(n^{1/k})$.
\end{theorem}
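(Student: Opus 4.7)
My plan is to construct a pair of symmetric two-point instances and use anonymity to force any mechanism into a contradiction at a ``middle'' sample count. Take the metric on two points $A,B$ at distance $1$. In Instance~1, place $\alpha n$ voters at $A$ and $(1-\alpha) n$ at $B$, so $SC(a^*) = SC(A) = 1-\alpha$. Instance~2 is the mirror image, with $A$ and $B$ swapped so $SC(a^*) = SC(B) = 1-\alpha$. I will set $\alpha = 1 - c/n$ for a constant $c$. Since the mechanism is anonymous and only two alternatives are ever proposed from the sample, its output distribution depends only on the number $j$ of $A$-preferring voters among the $s = 2k-2$ sampled voters. Let $p(j)$ denote the probability the mechanism outputs $A$ on input $j$, and $q(j) = 1 - p(j)$.

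The key observation is that $j = k-1$ is symmetric across the two instances: the probability of observing exactly $k-1$ $A$-preferers in the sample is $\binom{2k-2}{k-1}\alpha^{k-1}(1-\alpha)^{k-1}$ in both. Conditioned on this event, Instance~1 is hurt with probability $q(k-1)$ (mechanism outputs $B$, incurring $SC(B) = \alpha$), while Instance~2 is hurt with probability $p(k-1)$. Since $p(k-1)+q(k-1)=1$, at least one of these is $\geq 1/2$, so the worst instance satisfies
\[
\mathbb{E}\big[SC(a)^k\big] \;\geq\; \tfrac{1}{2}\binom{2k-2}{k-1}\,\alpha^{k-1}(1-\alpha)^{k-1}\cdot \alpha^k.
\]
Dividing by $SC(a^*)^k = (1-\alpha)^k$ yields $\big(Distortion^k(f)\big)^k \geq \tfrac{1}{2}\binom{2k-2}{k-1}\alpha^{2k-1}/(1-\alpha)$. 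Substituting $1-\alpha = c/n$ gives $\Omega(n)$, and taking $k$-th roots produces the desired bound $Distortion^k(f) = \Omega(n^{1/k})$.

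The main obstacle is seeing why the midpoint $j=k-1$ is the unavoidable ``hard'' value. The intuition is that in Instance~1 one needs $q(j)$ to be vanishingly small for every $j \geq k-1$: the probability of observing such a $j$ is of order $(1-\alpha)^{2k-2-j}$, which cannot absorb the penalty $(\alpha/(1-\alpha))^k$ unless $2k-2-j \geq k$, i.e., $j \leq k-2$. A symmetric argument for Instance~2 requires $p(j)$ vanishingly small for every $j \leq k-1$, so the two constraints collide precisely at $j = k-1$. Anonymity is essential here: without it, the mechanism could in principle behave differently on different $(k-1)$-subsets of the sample and evade the pigeonhole step $p(k-1)+q(k-1)=1$. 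I would also briefly check that at $j=k-1$ the sample contains both $A$- and $B$-preferers (since $1 \leq k-1 \leq 2k-3$ for $k\geq 2$), so both alternatives are in the proposed set $\tilde M$ and the mechanism really does have the freedom to output either, making the contradiction genuine rather than vacuous.
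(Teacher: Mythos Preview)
Your proof is correct and follows essentially the same approach as the paper: focus on the tied sample event $j=k-1$ among $2k-2$ draws on the two-point instance, argue that the mechanism must output the bad alternative with probability at least $1/2$ in this event, and conclude via the $(1-\alpha)^{k-1}$ versus $(1-\alpha)^k$ mismatch. The only notable difference is that you make the symmetry step explicit by pairing Instance~1 with its mirror Instance~2 and applying the pigeonhole $p(k-1)+q(k-1)=1$, whereas the paper simply asserts that ``any anonymous mechanism outputs $B$ with probability at least $1/2$'' in the tied event; your version is the cleaner justification of that line.
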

\begin{proof}
Consider the same instance as Theorem~\ref{thm:fullLB}. Suppose we sample $2k-2$ voters. Then the probability that we sample an equal number of voters located at $A$ and $B$ is 
$${2k-2 \choose k-1} \alpha^{k-1}(1-\alpha)^{k-1} \ge (2 \alpha)^{k-1} (1-\alpha)^{k-1} \ge (1-\alpha)^{k-1}$$
where we have assumed $\alpha \ge 1/2$. In this event, since there is no majority of voters in the sample that prefer either alternative, we assume that any anonymous mechanism outputs $B$ with probability at least $1/2$, so that the social cost is at least $\frac{\alpha}{2}$. Therefore, the $k^{th}$ moment of distortion is at least:
$$  \left(  (1-\alpha)^{k-1} \left( \frac{(\alpha/2)^k}{(1-\alpha)^k} \right) \right)^{1/k} = \frac{\alpha}{2 (1-\alpha)^{1/k}}$$
Choosing $\alpha = 1-c/n$ for constant $c$ so that all but $c$ voters lie at $A$, the above expression is $ \Omega(n^{1/k})$.
\end{proof}
	
\subsection{The $\mathbf{FRC_s}$ Mechanism}
Complementing the above impossibility, we show that sample complexity of $2k-1$ is also sufficient to achieve constant $k^{th}$ moment of Distortion. In particular, we define another family of social choice rules called Fully Random Copeland.

\begin{definition}
	The \textbf{Fully Random Copeland} rule parameterized by positive integer $s$, denoted $\mathbf{FRC_s}$ proceeds as follows. First samples $s$ voters $\tilde{N}$ drawn independently and uniformly at random from $N$ with replacement.   All voters in $\tilde{N}$ are queried for their favorite alternative, and the union of all such alternatives is denoted $\tilde{M}$. Finally, $FRC_s$ returns the winning alternative under the Copeland social choice rule with voters $\tilde{N}$ and alternatives $\tilde{M}$. 
\end{definition}

In the rest of this section, we will show the following. Intuitively, Theorem~\ref{thm:limUB} says that every additional two voters participating in FRC provide a constant approximation to the next higher moment of Distortion.

\begin{theorem}
\label{thm:limUB}
	$Distortion^k(FRC_{2k-1}) \leq 17.$
\end{theorem}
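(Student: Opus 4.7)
The plan is to parallel the proof of Theorem~\ref{thm:fullUB}, substituting $\tilde N$ for $N$ and using a cleverly chosen ball around the social optimum. Let $a^* = \mbox{argmin}_{a \in M} SC(a)$, $\mu = SC(a^*)$, and $X_i = d(i,a^*)$ for $i \in \tilde N$; these are $2k-1$ i.i.d.\ copies of a non-negative random variable with mean $\mu$. Sort them as $Y_1 \le Y_2 \le \cdots \le Y_{2k-1}$. The observation replacing Markov's inequality in Theorem~\ref{thm:fullUB} is purely definitional: the ball $B = B(a^*, Y_k)$ contains, by construction, the $k$ sampled voters nearest $a^*$, i.e., a strict majority of $\tilde N$. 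Accordingly I set $\rho = Y_k$.

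I would then rerun the Copeland/uncovered-set argument from Theorem~\ref{thm:fullUB} verbatim, with $\tilde N$ as the electorate, $\tilde M$ as the alternative set, and the sampled voter $m$ satisfying $X_m = Y_1$ together with its favorite $a_m$ as the anchor alternative. Every majority preference among $\tilde N$ produces at least one witness voter inside $B \cap \tilde N$, so the same triangle-inequality chain gives $d(W,a^*) \le 4\rho + 2Y_1 = 4Y_k + 2Y_1$, and hence $SC(W) \le 4Y_k + 2Y_1 + \mu$. Minkowski's inequality for $L^k$ norms then yields
\[ \left(\EX[SC(W)^k]\right)^{1/k} \le 4\left(\EX[Y_k^k]\right)^{1/k} + 2\left(\EX[Y_1^k]\right)^{1/k} + \mu. \]
Lemma~\ref{lem:min} immediately handles the $Y_1$ term: as the minimum of $2k-1 \ge k$ i.i.d.\ copies, $Y_1$ is stochastically dominated by the minimum of any $k$ of them, so $\left(\EX[Y_1^k]\right)^{1/k} \le \mu$.

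The main obstacle is bounding the $L^k$ norm of the sample median $Y_k$, which does not reduce to Lemma~\ref{lem:min}. The key inequality I would use is
\[ Y_k^k \;\le\; Y_k Y_{k+1} \cdots Y_{2k-1} \;=\; \max_{|S|=k}\prod_{i \in S} X_i \;\le\; \sum_{|S|=k}\prod_{i \in S} X_i, \]
whose expectation, by independence of the $X_i$'s, equals $\binom{2k-1}{k} \mu^k$. Combined with the uniform estimate $\binom{2k-1}{k} = \tfrac12 \binom{2k}{k} \le 4^k$, this yields $\left(\EX[Y_k^k]\right)^{1/k} \le 4\mu$, a bound independent of $n$, $m$, and $k$.

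Combining the pieces gives $\left(\EX[SC(W)^k]\right)^{1/k} \le 4\cdot 4\mu + 2\mu + \mu = 19\mu$, which is already a constant bound of the advertised form. Closing the final gap to the sharper constant $17$ will presumably require a tighter joint treatment of the coupling $Y_1 \le Y_k$ --- for instance, expanding $(4Y_k + 2Y_1 + \mu)^k$ multinomially and bounding each mixed moment $\EX[Y_k^{i} Y_1^{j}]$ by a symmetric-polynomial argument in place of the loose triangle-inequality split in Minkowski --- but the structural recipe (ball of radius $Y_k$, uncovered set on the sampled tournament, and the elementary-symmetric bound on $Y_k^k$) is what drives the constant moment bound from $2k-1$ samples.
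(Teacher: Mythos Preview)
Your argument is sound and yields a constant bound of $19$; the structural recipe (ball of radius $Y_k$, Copeland on the sampled electorate, and the moment bound on the median) is exactly right. The paper, however, closes the gap to $17$ not by a tighter moment analysis but by a sharper \emph{deterministic} step. Rather than reusing the two-hop uncovered-set chain of Theorem~\ref{thm:fullUB} anchored at $a_m$ (which is what costs you the extra $2Y_1$), the paper argues directly: for any $j,l\in B\cap\tilde N$ one has $d(j,a_l)\le d(j,a^*)+d(l,a^*)+d(l,a_l)\le 3Y_k$, while $d(j,W)\ge d(a^*,W)-Y_k$; so if $d(a^*,W)>4Y_k$ then \emph{every} one of the $\ge k$ voters in $B\cap\tilde N$ strictly prefers \emph{every} favorite in $\{a_l:l\in B\cap\tilde N\}$ to $W$, giving $W$ in-degree at least $k$ in the majority tournament and contradicting a simple Copeland property (their Lemma~\ref{lem:copeland}). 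Thus $SC(W)\le\mu+4Y_k$ deterministically, and $1+4\cdot4=17$ follows without any $Y_1$ term. So your proposed route of bounding mixed moments $\EX[Y_k^iY_1^j]$ is unnecessary; the improvement lives in the tournament step. Incidentally, your bound $Y_k^k\le Y_kY_{k+1}\cdots Y_{2k-1}\le e_k(X_1,\ldots,X_{2k-1})$ with $\EX[e_k]=\binom{2k-1}{k}\mu^k$ is a clean alternative to the paper's Lemma~\ref{lem:median}, which instead union-bounds $\Pr[Y_k=d_i]$ by $\binom{2k-1}{k}$ times the corresponding probability for the minimum of $k$ draws and then reuses the computation from Lemma~\ref{lem:min}; both routes land on the same $(4\mu)^k$.
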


Again, as a consequence of Markov's inequality, we have the following high probability bound. In particular, every additional two voters in FRC provide a geometric improvement in the high probability bound.

\begin{corollary}
\label{cor:FRCProb}
	For $c \geq 17$, the probability that $FRC_{2k-1}$ outputs an alternative with social cost more than $c$ times that of the social optimum is at most $(17/c)^k$.
\end{corollary}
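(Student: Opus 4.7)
The proof will follow the template of Theorem~\ref{thm:fullUB}, with modifications that account for the fact that Copeland in $FRC_{2k-1}$ is computed only over the $2k-1$ sampled voters. As in Section~\ref{sec:sample}, let $a^*$ minimize social cost, $\mu = SC(a^*)$, and for each sample $i \in \tilde{N}$ let $X_i = d(i, a^*)$. Let $m \in \tilde{N}$ minimize $X_i$, with favorite $a_m$, so that $d(a_m, a^*) \le 2 X_m$ by triangle inequality.

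The Copeland winner $W$ lies in the uncovered set of the sub-tournament on $\tilde{N}$, so either (i) a majority of $\tilde{N}$ (i.e., at least $k$ samples) prefer $W$ to $a_m$, or (ii) there exists $W'$ such that a majority prefers $W$ to $W'$ and a majority prefers $W'$ to $a_m$. The crucial replacement for the deterministic Markov-based ball in PRC is the following pigeonhole on order statistics: \emph{any subset of $\tilde{N}$ of size at least $k$ contains a sample with $X_i \le X_{(k)}$}, the $k$-th order statistic of $\{X_j : j \in \tilde{N}\}$, because there are only $k-1$ samples with $X_j > X_{(k)}$. Applying triangle inequality at a nearby witness in each majority set (exactly as in the PRC case analysis) gives $d(W, a^*) \le 2 X_{(k)} + 2 X_m$ in case (i) and $d(W, a^*) \le 4 X_{(k)} + 2 X_m$ in case (ii), so in the worst case $SC(W) \le \mu + 4 X_{(k)} + 2 X_m$.

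It remains to bound the $k$-th moments of $X_m$ and $X_{(k)}$. For $X_m$, note it is the minimum of $2k-1 \ge k$ i.i.d.\ copies of $X$ (stochastically smaller than the minimum of only $k$ copies), so Lemma~\ref{lem:min} plus Jensen yield $(\mathbb{E}[X_m^k])^{1/k} \le \mu$. For $X_{(k)}$, I would prove the key lemma
\begin{equation*}
\mathbb{E}[X_{(k)}^k] \;\le\; \binom{2k-1}{k}\, \mu^k
\end{equation*}
by the deterministic domination $X_{(k)}^k \le \prod_{i=k}^{2k-1} X_{(i)} \le e_k(X_1, \dots, X_{2k-1})$, where the first inequality holds because $X_{(k)} \le X_{(i)}$ for $i \ge k$ and the second because the product of the top-$k$ order statistics is a single non-negative summand of the $k$-th elementary symmetric polynomial. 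By i.i.d.\ and linearity, $\mathbb{E}[e_k] = \binom{2k-1}{k} \mu^k$, and $\binom{2k-1}{k}^{1/k}$ is bounded by a constant in $k$. Minkowski's inequality for $L^k$ norms applied to $SC(W) \le \mu + 4 X_{(k)} + 2 X_m$ then yields $(\mathbb{E}[SC(W)^k])^{1/k} \le 17\mu$ after careful arithmetic.

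The main obstacle is the $k$-th moment bound on the median $X_{(k)}$ of $2k-1$ i.i.d.\ samples. A direct attempt via integrating the union-bound tail $\Pr[X_{(k)} > t] \le \binom{2k-1}{k}(\mu/t)^k$ fails, because $\mathbb{E}[X_{(k)}^k] = \int k t^{k-1} \Pr[X_{(k)} > t]\, dt$ then diverges at the critical rate. The elementary-symmetric-polynomial domination sidesteps the tail integration entirely and is the essential new ingredient relative to the PRC argument.
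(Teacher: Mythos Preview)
Your proposal is sound in structure but does \emph{not} deliver the constant $17$; it gives $19$. The paper obtains Corollary~\ref{cor:FRCProb} in one line by applying Markov's inequality to Theorem~\ref{thm:limUB}, so the real content is the bound $\mbox{Distortion}^k(FRC_{2k-1})\le 17$, which you are re-deriving.

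The key difference is that the paper does \emph{not} port the uncovered-set two-step (anchored at $a_m$) from the $PRC$ analysis. Instead it argues directly via Lemma~\ref{lem:copeland}: letting $Y=X_{(k)}$ be the sample median and $B$ the ball of radius $Y$ about $a^*$, the $k$ sampled voters in $B$ each have $d(j,a_l)\le 3Y$ for \emph{every} favorite $a_l$ with $l\in B$. Hence if $d(W,a^*)>4Y$ then every voter in $B$ strictly prefers every such $a_l$ to $W$, so $W$ is pairwise beaten by $k$ alternatives and cannot be the Copeland winner. This yields $SC(W)\le \mu+4Y$ with no residual $X_m$ term, and then $(\EX[SC(W)^k])^{1/k}\le \mu+16\mu=17\mu$ exactly.

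Your uncovered-set route correctly gives $SC(W)\le \mu+4X_{(k)}+2X_m$, but Minkowski together with your bounds $(\EX[X_{(k)}^k])^{1/k}\le \binom{2k-1}{k}^{1/k}\mu<4\mu$ and $(\EX[X_m^k])^{1/k}\le\mu$ yields only $(\EX[SC(W)^k])^{1/k}<\mu+16\mu+2\mu=19\mu$. Since $\binom{2k-1}{k}^{1/k}\to 4$ as $k\to\infty$, no sharper use of that binomial bound will recover $17$ uniformly in $k$; your ``careful arithmetic'' cannot close the gap. You would prove the corollary with $(19/c)^k$, not $(17/c)^k$.

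Your elementary-symmetric-polynomial argument $X_{(k)}^k\le\prod_{i\ge k}X_{(i)}\le e_k(X_1,\ldots,X_{2k-1})$ with $\EX[e_k]=\binom{2k-1}{k}\mu^k$ is a genuinely cleaner alternative to the paper's Lemma~\ref{lem:median}, which reaches the identical inequality by bounding $\Pr[Y=d_i]\le\binom{2k-1}{k}\Pr[\min\text{ of }k\text{ draws}=d_i]$ and reusing the telescoping computation from Lemma~\ref{lem:min}. Both give the same numeric bound, so your observation about the divergent tail integration is on point: the issue is resolved not by integrating but by a pointwise domination, and yours is the more transparent one.
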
   

As in Section~\ref{sec:sample}, we first present a result on bounding the $k^{th}$ moment of a function of $i.i.d.$ random variables; this time the function is the median instead of the minimum.

\begin{lemma}
\label{lem:median}
Let $X_1, X_2, \ldots, X_{2k-1}$ be drawn $i.i.d.$ from distribution $X$ and let $\mu = \EX[X]$. Let $Y$ denote the median of $X_1, X_2, \ldots, X_{2k-1}$ . Then, $ \left(\EX[Y^k]\right)^{1/k} \leq  4\mu$.
\end{lemma}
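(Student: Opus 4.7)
The plan is to dominate $Y^k$ pointwise by the $k$-th elementary symmetric polynomial in $X_1,\ldots,X_{2k-1}$, whose expectation decouples via independence. This sidesteps integrating a tail bound on $Y$, which would run into trouble because a Markov estimate $\Pr[X>t]\le \mu/t$ decays only like $1/t$, giving a divergent integral when bounding $\EX[Y^k]$ from a first-moment assumption alone. So the clever telescoping style of Lemma~\ref{lem:min} will not be needed here; the extra samples will do the work instead.

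Concretely, I would order the samples as $X_{(1)} \le X_{(2)} \le \cdots \le X_{(2k-1)}$, so that $Y = X_{(k)}$. Since $Y \le X_{(j)}$ for every $j \ge k$ and all variables are non-negative, spreading the $k$ copies of $Y$ across the top-$k$ order statistics gives the pointwise chain
\[
Y^k \;=\; X_{(k)}^k \;\le\; \prod_{j=k}^{2k-1} X_{(j)} \;=\; \max_{\substack{S \subseteq \{1,\ldots,2k-1\} \\ |S|=k}}\,\prod_{i \in S} X_i \;\le\; \sum_{|S|=k}\,\prod_{i \in S} X_i \;=\; e_k(X_1,\ldots,X_{2k-1}),
\]
where $e_k$ denotes the $k$-th elementary symmetric polynomial. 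The last inequality uses only non-negativity of the $X_i$ (the maximum of non-negative terms is at most their sum).

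Taking expectations and invoking symmetry together with independence of the $X_i$,
\[
\EX\!\left[e_k(X_1,\ldots,X_{2k-1})\right] \;=\; \binom{2k-1}{k}\,\EX[X_1]\cdots \EX[X_k] \;=\; \binom{2k-1}{k}\,\mu^k,
\]
and the elementary identity $\binom{2k-1}{k}=\tfrac12\binom{2k}{k}\le \tfrac12\cdot 4^k$ yields $(\EX[Y^k])^{1/k} \le \binom{2k-1}{k}^{1/k}\mu < 4\mu$. The only step requiring any real insight is the pointwise domination $Y^k \le \prod_{j=k}^{2k-1} X_{(j)}$, which promotes $Y^k$ into a product of $k$ distinct order statistics whose expectation factors through independence; once this is in place the rest is bookkeeping. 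The constant $4$ is structural: the $k-1$ ``surplus'' samples above the median are exactly what is needed to spread $Y^k$ over $k$ independent factors, and $4^{1/k}\binom{2k-1}{k}^{1/k}\to 4$ reflects the Stirling asymptotics of $\binom{2k-1}{k}$.
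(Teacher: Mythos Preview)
Your proof is correct, and it takes a genuinely different route from the paper's. The paper reduces Lemma~\ref{lem:median} to Lemma~\ref{lem:min}: it observes that if the median equals $d_i$ then some $k$-subset of the $2k-1$ samples has minimum $d_i$, union-bounds over the $\binom{2k-1}{k}$ such subsets to get $\Pr[Y=d_i]\le\binom{2k-1}{k}\bigl[(P_i+\rho_i)^k-P_i^k\bigr]$, and then invokes the telescoping computation of Lemma~\ref{lem:min} to conclude $\EX[Y^k]\le\binom{2k-1}{k}\mu^k$. You bypass Lemma~\ref{lem:min} altogether via the pointwise domination $Y^k\le\prod_{j=k}^{2k-1}X_{(j)}\le e_k(X_1,\ldots,X_{2k-1})$, after which independence factors the expectation immediately. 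Both arguments land on the same bound $\EX[Y^k]\le\binom{2k-1}{k}\mu^k$, but yours is shorter and self-contained; in fact the same trick gives a one-line proof of Lemma~\ref{lem:min} itself, since with $k$ samples $(\min_i X_i)^k\le\prod_{i=1}^k X_i$ and $\EX\bigl[\prod_i X_i\bigr]=\mu^k$. The paper's approach has the minor conceptual virtue of making the reduction to the minimum explicit, but your symmetric-polynomial domination is the cleaner mechanism. One small quibble: in your closing sentence the factor $4^{1/k}$ is extraneous---the bound you derived is $(\EX[Y^k])^{1/k}\le\binom{2k-1}{k}^{1/k}\mu$, and it is $\binom{2k-1}{k}^{1/k}\to 4$ (Stirling) that shows the constant is asymptotically tight.
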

\begin{proof} 
We follow the same structure as the proof of Lemma~\ref{lem:min}. Let $X$ have support $d_1 \leq d_2 \leq \cdots \leq d_n$ with probabilities $\rho_1, \rho_2,\ldots,\rho_{n}$ and let $P_i = \sum\limits_{y = i+1}^{n}\rho_y$. If $Y = d_i$, then there exists a subset of $k$ values  from $X_1, X_2, \ldots, X_{2k-1}$ whose minimum is $d_i$. Using the expression for the probability of the minimum of $k$ values from Lemma~\ref{lem:min}, the probability of the latter event can be upper bounded as:
\begin{align*} \Pr[Y = d_i] & \leq {2k-1 \choose k} \Pr[\mbox{Min of $k$ draws from } X = d_i] \\
& = {2k-1 \choose k}  \big[(P_i+\rho_i)^k - P_i^k\big] \\
\implies  \EX[Y^k] & \leq {2k-1 \choose k}\sum_{i=1}^{n}d_i^k \big[(P_i+\rho_i)^k - P_i^k\big].
 \end{align*}
From the proof of Lemma~\ref{lem:min}, $ \sum_{i=1}^{n}d_i^k \big[(P_i+\rho_i)^k - P_i^k \big] \le \mu^k$. Therefore we have:
$$ \EX[Y^k] \leq {2k-1 \choose k} \mu^k  \leq {2k \choose k}\mu^k  \leq (1+1)^{2k}\mu^k  =(4\mu)^k$$
\end{proof}

We will also need the following straightforward property of the Copeland Rule.  
\begin{lemma}
\label{lem:copeland}
Suppose there are $2k-1$ alternatives and voters. We construct a tournament graph on the alternatives where there is a directed edge from alternative $S$ to alternative $T$  if at least $k$ voters strictly prefer $S$ to $T$. Then the Copeland rule always picks an alternative $W$ with in-degree strictly less than $k$. 
\end{lemma}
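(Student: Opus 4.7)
The plan is to prove the lemma by a direct averaging argument on the in-degrees of the tournament graph. First, I would observe that because the number of voters $2k-1$ is odd and voters hold strict preferences (as assumed throughout the paper), no pairwise comparison can end in a tie: for any distinct alternatives $S$ and $T$, exactly one of ``at least $k$ voters prefer $S$ to $T$'' or ``at least $k$ voters prefer $T$ to $S$'' holds. Hence the tournament graph on the $2k-1$ alternatives is a complete orientation, i.e., between every pair of alternatives there is exactly one directed edge.

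Next, I would count edges. The total number of directed edges in the tournament equals the number of unordered pairs of alternatives, namely
\[
\binom{2k-1}{2} = (2k-1)(k-1).
\]
Since the sum of in-degrees over the $2k-1$ alternatives equals the total number of edges, the average in-degree is exactly
\[
\frac{(2k-1)(k-1)}{2k-1} = k-1.
\]
In particular, at least one alternative must have in-degree at most $k-1$.

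Now I would invoke the definition of Copeland: the Copeland winner is an alternative that maximizes the number of pairwise victories, equivalently, that maximizes out-degree in the tournament. Since each alternative has in-degree plus out-degree equal to $2k-2$, maximizing out-degree is the same as minimizing in-degree. Therefore any Copeland winner $W$ has in-degree no larger than the in-degree of the alternative witnessing the averaging bound, and so
\[
\operatorname{in\text{-}deg}(W) \leq k-1 < k,
\]
which is the conclusion.

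There is no real obstacle here: the only subtlety is noting that ties in pairwise contests are ruled out by the odd number of voters and strict preferences, so the tournament is genuinely a complete orientation and the counting argument applies without modification. If the paper wanted to allow indifferences among voters, one would need to check that any tiebreaking used to orient tied pairs does not increase the minimum in-degree above $k-1$, but this still follows from the same averaging bound on directed edges.
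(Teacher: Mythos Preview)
Your argument is correct and is essentially the natural way to justify the claim. Note that the paper itself does not give a proof of this lemma: it simply introduces it as a ``straightforward property of the Copeland Rule'' and moves on. Your averaging argument---observing that with an odd number $2k-1$ of voters and strict preferences the tournament is a complete orientation on $2k-1$ vertices, that the sum of in-degrees equals $\binom{2k-1}{2}=(2k-1)(k-1)$, and hence that the minimum in-degree (which is attained by any Copeland winner) is at most $k-1$---is exactly the kind of one-paragraph justification the paper is implicitly appealing to. There is nothing to add or correct.
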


We now proceed to prove Theorem~\ref{thm:limUB}.
 As in the proof of Theorem~\ref{thm:fullUB}, let $a^*$ denote the optimal alternative, and let $SC(a^*) = \mu = \frac{1}{n} \sum_{i \in N} d(i,a^*)$. Suppose we sample a subset of voters, $S$ of size $2k-1$ For $i \in S$, let $X_i = d(i,a^*)$. Order these voters so that $X_1 \leq X_2 \leq \cdots \leq X_{2k-1}$ and let $m$ be the voter  that corresponds to the median of this sequence. Let $Y = d(m, a^*)$.  Note from Lemma~\ref{lem:median} that $\EX[Y^k] \leq  (4\mu)^k$.

Suppose the Copeland rule chooses an alternative $W$, and suppose $d(W, a^*) = \alpha Y$. We will find an upper bound for $\alpha$.  Consider the ball centered around $a^*$ with radius $Y$; call this $B$. By definition, at least $k$ agents in $S$ lie within $B$. Note that for any $j \in B \cap S$, $d(j,a_j) \le d(j,a^*) \le Y$. Therefore, for $j, l \in B \cap S$, we have 
$$ d(j, a_l) \le d(j,a^*) + d(l, a^*) + d(l, a_l) \le Y + Y + Y = 3Y$$
 Now, for $j \in B \cap S$, we have
$$ d(j, W) \ge d(a^*, W) - d(j,a^*) \ge (\alpha - 1) Y$$
If $\alpha > 4$, then combining the above two observations, we have that for all $j, l \in B$, we have $d(j,a_l) \le 3Y < d(j,W)$. This means that the set of at least $k$ voters in $B \cap S$ strictly prefer all of the favorite alternatives $\{a_l, l \in B \cap S\}$ to $W$.
From Lemma~\ref{lem:copeland}, this means that $W$ cannot be the Copeland winner.  Thus, for $W$ to win in the Copeland rule, we must have $\alpha \leq 4$ so $d(W, a^*)  \leq 4 Y$. By triangle inequality, 
$$SC(W) \leq SC(a^*) + d(a^*,W) \le \mu + 4Y$$ 
Using Jensen's inequality in a fashion similar to the proof of Theorem~\ref{thm:fullUB},  and using $\EX[Y^k] \leq  (4\mu)^k$ we have:
$$ \EX[SC(W)^k]  \leq \EX[(\mu + 4Y)^k]  \le \sum_{r=0}^{k} {k \choose r} \mu^k 16^r = 17^k\mu^k$$
so we have that $Distortion^k(FRC_{2k-1}) \leq 17$. This completes the proof of Theorem~\ref{thm:limUB}.

\section{Empirical Simulation}
\label{sec:empirical}
In this section, we augment our theoretical worst case analysis with a qualitative empirical demonstration of the concentration achieved by the PRC and FRC mechanisms on real world data. We use data from the Participatory Budgeting project; see~\cite{PBP}. In this domain, there are a number of public projects (such as new sidewalks, park renovations, etc.). Each project has a monetary cost, and we want to select a set of projects subject to not exceeding a total budget. In participatory budgeting, local community members vote directly over their preferred projects, and these votes are aggregated to decide which projects to fund. 

We consider knapsack voting data~\cite{PBP}, where each voter reports the set of projects they most prefer, subject to the total budget constraint. This makes knapsack voter data particularly useful for us: voters select their single alternative out of a very large space, the power set of projects. Because we also have information about the latent combinatorial space (specific projects selected and their costs), we can impose simplistic but natural notions of distance to allow us to simulate our mechanisms and study their performance with respect to the imposed distance. 

\paragraph{Simulation.} It is important to note that this is a simulation; actually running our mechanisms does not require specifying a notion of distance, and we do not know how these voters would have responded to ordinal queries in reality. We are treating an entire budget allocation as a single outcome and imputing preferences of voters over these outcomes. This reduces the problem to single winner election over a large space of alternatives in keeping with the theoretical model in the paper. Therefore, natural baseline mechanisms are single winner rules with small sample complexity, particularly Random Dictatorship which is the best-known mechanism with respect to the first moment of Distortion. Other mechanisms for participatory budgeting are tailored to specific models of voter preferences over the combinatorial space of projects, and do not, in general, provide constant Distortion guarantees for arbitrary metrics. 

\paragraph{Setup.} We consider two simple notions of distance: budget distance and Jaccard distance. Suppose there are $p$ public projects numbered $1, \hdots, p$ with costs $c_1, \hdots, c_p$, and there is a total budget of $B$. A \textit{feasible budget} is a set of projects $P$ such that $\sum_{i \in P} c_p \leq B$. The \textit{budget distance} between budgets $P$ and $Q$ is $1 - \frac{1}{B} \sum_{i \in P \cap Q} c_i$. The \textit{Jaccard distance} between $P$ and $Q$ is $1 - \frac{|P \cap Q|}{|P \cup Q|}$. The social cost of a given budget is the average distance to the proposed budgets of the voters. We use knapsack voting data from the Participatory Budgeting election held in Cambridge, MA, USA in 2015. There were 945 voters, 23 projects (implying $2^{23} > 8$ million possible budgets), and a total budget constraint of $\$600,000$. 

\paragraph{Results.} In Figure~\ref{fig:boxplotCambridge}, we present the box plots of the distributions of social cost of PRC and FRC alongside Random Dictatorship  (RD) when simulating using budget distance and Jaccard distance respectively. The RD mechanism samples a single most preferred alternative uniformly at random and has Distortion at most 3~\cite{metricSocialChoiceRandomDictator}, which is asymptotically the best known bound for any randomized social choice mechanism for arbitrary metrics. The examples qualitatively verify that the PRC and FRC mechanisms do provide substantial concentration in terms of the approximation to the optimal social cost. Furthermore, in practice we observe better average performance of PRC and FRC over that of RD, despite RD's theoretical optimality with respect to the first moment of Distortion. The results also show that FRC requires more samples to achieve similar performance as PRC. To summarize, even on real datasets, just a few additional samples provide substantially improved concentration.  

\begin{figure}[t]
\centering
\includegraphics[width=0.67\linewidth]{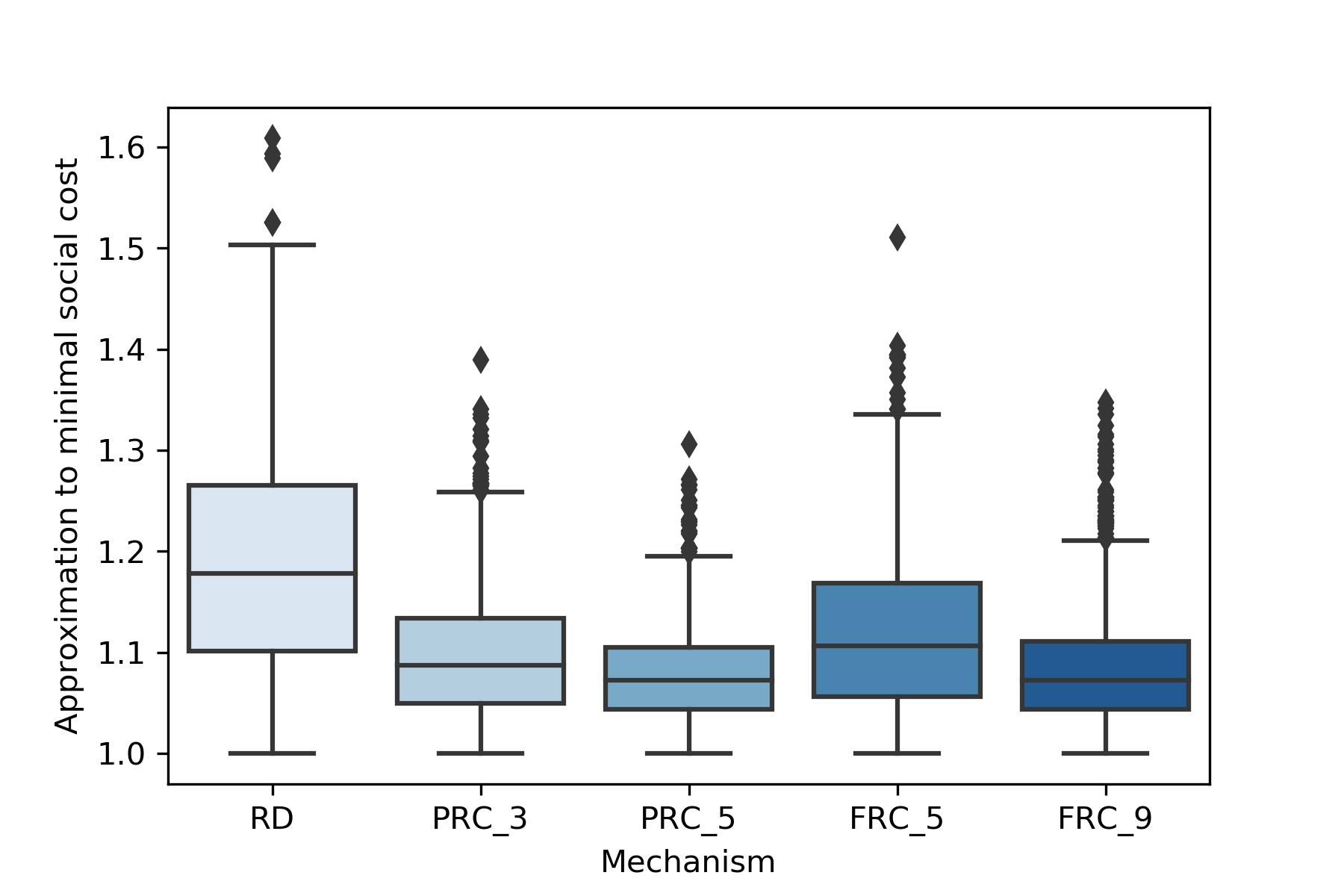} 
\includegraphics[width=0.67\linewidth]{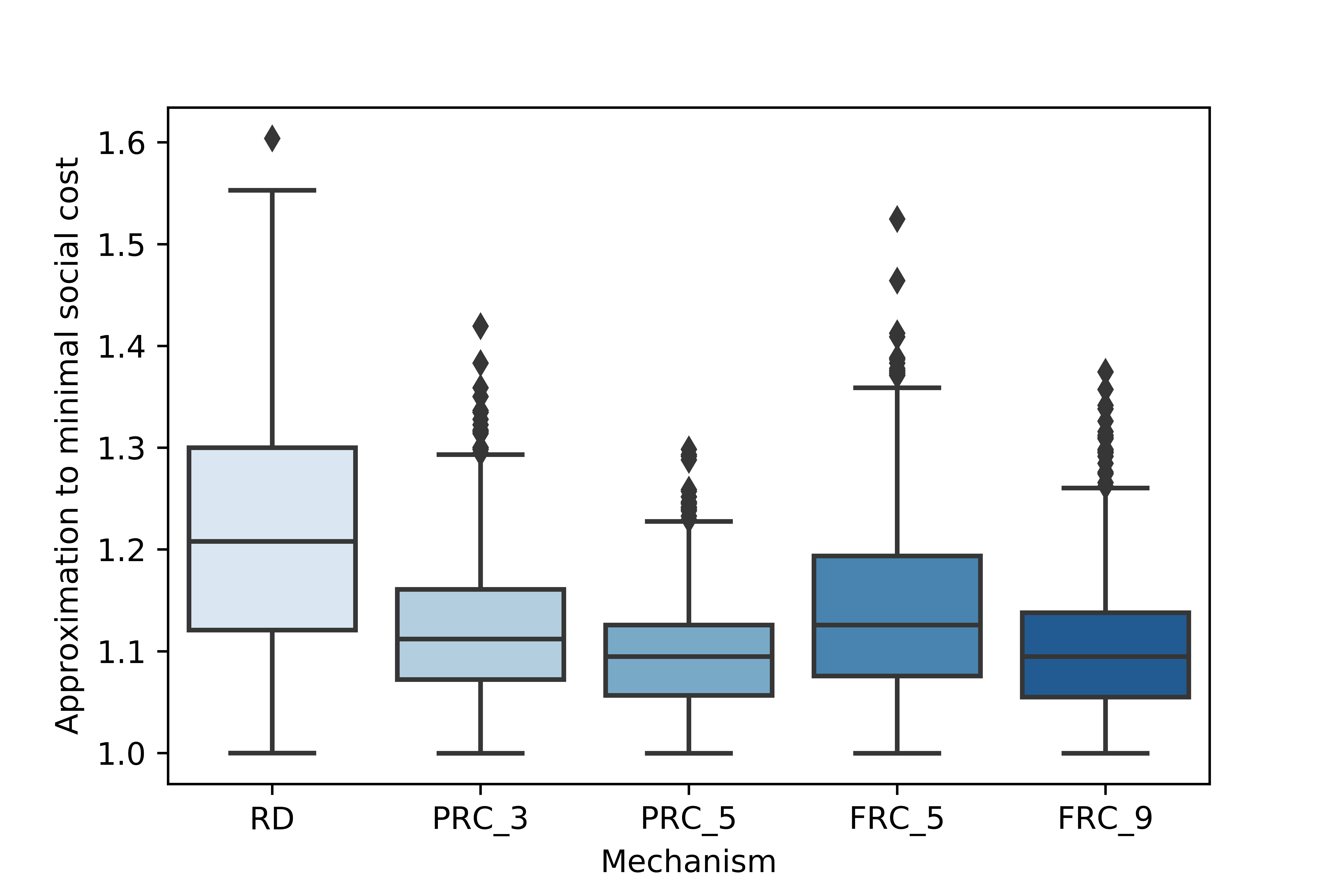}
\caption{Distribution of approximation to optimal social cost for 1,000 runs of each mechanism on Cambridge 2015 knapsack voting data using budget distance (top) and Jaccard distance (bottom).}
\label{fig:boxplotCambridge}
\label{fig:boxplotCambridgeJaccard}
\end{figure}

\section{Future Directions}
\label{sec:future}
 There are several avenues of future research. Our mechanisms involve first sampling some alternatives and then putting them to vote. Is there a truly one-shot mechanism that can bound higher moments of Distortion while only eliciting a constant amount of information from each voter with respect to the number of alternatives? Our intuition is that this should be impossible. Also, though our sample complexity bounds are tight, the exact constant in the Distortion bounds can likely be improved. However, this improvement may be nontrivial: We do not use the Distortion of Copeland as a black box, so results such as~\cite{MunagalaW19} do not directly improve our bounds. As in~\cite{Nisarg1}, it would be interesting to analyze the effect of bits of cardinal information on the sample complexity. For instance, what if we sample fewer voters, but these voters could express limited cardinal information? In a related vein, could methods that make voters interact like~\cite{sequentialDeliberation} help reduce the sample complexity of the process? 

\subsubsection{Acknowledgments} 
Kamesh Munagala was supported by NSF grants CCF-1408784, CCF-1637397, and IIS-1447554, ONR award N00014-19-1-2268, and awards from Adobe and Facebook.

\bibliographystyle{alpha}
\bibliography{refs.bib}

\end{document}